\documentclass[conference]{IEEEtran}
\IEEEoverridecommandlockouts
\usepackage[cmex10]{amsmath}
\usepackage{amssymb,bm,amsthm,mathtools}
\usepackage{graphicx}
\usepackage[caption=false,font=footnotesize]{subfig}
\usepackage{pgfplots,pgfplotstable}
\pgfplotsset{compat=1.18}
\usepgfplotslibrary{groupplots,fillbetween}
\usepackage[dvipsnames]{xcolor}
\usepackage{microtype}
\usepackage{booktabs}
\usepackage{cite}
\usepackage{url}
\usepackage{balance}
\usepackage{algorithm,algorithmic}
\usepackage{titlesec}
\titlespacing*{\section}
  {0pt}{1.0ex plus 0.2ex minus 0.2ex}{0.5ex}
\titlespacing*{\subsection}
  {0pt}{0.8ex plus 0.2ex minus 0.2ex}{0.4ex}
\usepackage{needspace}
\DeclareMathOperator{\sgn}{sgn}
\DeclareMathOperator{\atanh}{atanh}
\newcommand{\GF}{\mathrm{GF}}
\newcommand{\Hmat}{\bm{H}}
\newcommand{\evec}{\bm{e}}
\newcommand{\svec}{\bm{s}}
\newcommand{\sres}{\tilde{\bm{s}}}
\newcommand{\Mset}{\mathcal{M}}
\newcommand{\Nset}{\mathcal{N}}
\newcommand{\Eset}{\mathcal{E}} 
\newcommand{\Sset}{\mathcal{S}} 
\newcommand{\scaleNMS}{\alpha}           
\newcommand{\alphaeff}{\alpha_{\mathrm{eff}}}  
\newcommand{\alphastar}{\alpha^{\star}}  
\newcommand{\alpharef}{\bar{\alpha}}     
\newcommand{\alphamin}{\alpha_{\min}}
\newcommand{\alphamax}{\alpha_{\max}}
\newcommand{\etaunsat}{\eta_{\mathrm{unsat}}}
\newcommand{\alphaopt}{\alpha_{\mathrm{opt}}}  
\newcommand{\tfm}{\kappa}
\newcommand{\synratio}{\gamma}
\newcommand{\Lzero}{L_0}               
\newcommand{\phifn}{\phi}
\newcommand{\Mij}{M_{i\to j}^{(\iter)}}
\newcommand{\iter}{\ell}
\newcommand{\itermax}{\ell_{\max}}
\newcommand{\eps}{\varepsilon}
\newcommand{\epszero}{\varepsilon_0}
\newcommand{\FER}{\mathrm{FER}}

\newcommand{\Prob}{\mathbb{P}}
\newtheorem{proposition}{Proposition}
\newtheoremstyle{remarkstyle}{3pt}{3pt}{\normalfont}{0pt}{\itshape}{.}{.5em}{}
\theoremstyle{remarkstyle}
\newtheorem{remark}{Remark}
\newtheoremstyle{corollarystyle}{3pt}{3pt}{\normalfont}{0pt}{\itshape}{.}{.5em}{}
\theoremstyle{corollarystyle}
\newtheorem{corollary}{Corollary}
\pgfplotsset{every axis/.style={
    label style={font=\footnotesize},
    tick label style={font=\footnotesize},
    legend style={font=\scriptsize,draw=gray!40,fill=white,
                  fill opacity=0.9,text opacity=1},
    grid=both,
    grid style={line width=0.3pt,color=gray!30},
    major grid style={line width=0.5pt,color=gray!50},
    minor tick num=1,
}}
\begin{document}
\title{Syndrome Adaptive Gain Control for Min-Sum Decoding
       of Quantum LDPC Codes}
\author{\IEEEauthorblockN{%
  Hernan~Cordova,
  Alexios~Balatsoukas-Stimming,
  Yunus~Can~G\"{u}ltekin,
  Gabriele~Liga,
  and~Alex~Alvarado}\\
\IEEEauthorblockA{Eindhoven University of Technology,
  Eindhoven, The Netherlands \\
  \{h.x.cordova, a.k.balatsoukas.stimming,
   y.c.g.gultekin, g.liga, a.alvarado\}@tue.nl}}
\maketitle
\begin{abstract}
Min-Sum~(MS) decoding is a popular low-complexity alternative to 
belief propagation (BP), retaining only the minimum incoming message magnitude during check-node (CN) processing, at the cost of systematic message magnitude overestimation. The scaled MS~(SMS) decoder compensates for this effect using a fixed scaling factor.
We propose the \emph{syndrome adaptive gain Min-Sum}~(SAGMS)
decoder for quantum low-density parity-check~(QLDPC) codes, which adapts the message gain online based on the fraction of unsatisfied stabilizers, requiring no per-code or per-noise level optimization.
We show that the scaling factor required for SMS to match belief propagation decreases with the CN degree, so any fixed scaling optimized for one degree incurs into a growing penalty as the CN degree varies. SAGMS avoids this limitation by adapting the gain during decoding.
Simulations on generalized bicycle QLDPC codes demonstrate that SAGMS matches or outperforms the frame error rate (FER) of an offline optimized SMS decoder. Moreover, SAGMS approaches BP performance and, under certain conditions outperforms it while retaining MS-level complexity.
\end{abstract}
\begin{IEEEkeywords}
Adaptive gain, Belief propagation, QLDPC, MinSum decoding, quantum error correction.
\end{IEEEkeywords}
\section{Introduction}
\label{sec:intro}
Quantum low-density parity-check (QLDPC) codes are a leading candidate for scalable quantum error correction (QEC), combining sparse stabilizer structure with minimum distance, growing polynomially in the block length for constant rate~\cite{panteleev2022asymptotically,
leverrier2022quantum,tillich2014quantum}. Decoding is commonly performed using belief propagation (BP) on a Tanner graph (TG) derived from the stabilizer parity check matrix~\cite{Mackay2004,LeiferPoulin2008}. 
The quaternary formulation of BP~(BP4), operating in a Galois field of order $4$, i.e., $\GF(4)$, allows the decoder to naturally capture degeneracy, where different Pauli error patterns produce identical syndromes~\cite{LeiferPoulin2008,Gottesman1997}. However, BP4 requires transcendental operations (e.g., $\log$, $\tanh$) at each check-node (CN) update, making hardware implementation expensive.

Min-Sum (MS) approximations replace these operations with comparisons and additions, significantly reducing complexity~\cite{chen2002near}. This reduction comes at the cost of systematic message magnitude overestimation~\cite{chen2005reduced}. Scaled MS (SMS) mitigates this effect by introducing a fixed scaling factor $\scaleNMS\in(0,1]$ to multiply CN outputs and compensate for this overestimation, but requires offline optimization of $\scaleNMS$ per code and noise level~\cite{chen2002near,chen2005reduced}.
A key property underlying the stability of BP4 on loopy graphs is its implicit gain regulation~\cite{Gallager1962}. The CN update compresses message magnitudes, preventing excessive amplification in short cycles. In contrast, SMS replaces this mechanism with $\scaleNMS$, which cannot adapt to the decoder state.
While MS and SMS are well established for classical LDPC codes and widely deployed in communication standards~\cite{RyanLin2009}, their extension to QLDPC codes is non-trivial. First, QLDPC TGs are typically denser and more irregular due to stabilizer (orthogonality) constraints and overcomplete representations~\cite{smiao_qbp_overcomplete2025}, making the best performing $\scaleNMS$ potentially highly code-dependent. 
Second, the quantum depolarizing channel couples $X$, $Y$ and $Z$ errors, requiring joint processing over $\GF(4)$ to fully exploit degeneracy. Decoders that process $X$ and $Z$ components independently lose this advantage, e.g., BP2,~\cite{LeiferPoulin2008,Gottesman1997}. Third, the noise level is often uncertain or time-varying, making offline optimization of $\scaleNMS$ unreliable in practice.

In the classical LDPC setting, adaptive scaling of Min-Sum messages has been widely studied~\cite{chen2002near,chen2005reduced}. Existing methods adjust the scaling factor using offline optimization or pretrained mappings based on quantities such as mutual information, generalized mutual information (GMI), or the signal-to-noise ratio (SNR)~\cite{xu2014variable,jung2014sanms}. Other approaches use the fraction of unsatisfied parity checks as a feedback signal during decoding~\cite{fan2013anmsa,wu2010adaptive}. However, these methods are restricted to the $\GF(2)$ setting and require per-code, per-SNR offline optimization. For the quantum setting this is an active area of research. In~\cite{chytas2025enhancedMS} authors proposed an enhanced MS using previous iteration dynamics in correction terms, while in~\cite{javed2024lowcomplexity} they proposed a linear programming-based low-complexity decoder. Both of these proposals require offline parameter calibration and treat $X$ and $Z$ errors independently, losing the GF(4)\footnote{When BP4 messages are reduced to scalar LLRs via
the scalarization of~\cite{Lai_2021,smiao_qbp_overcomplete2025},
the CN arithmetic superficially resembles binary BP but the
resulting decoder still propagates joint beliefs over all four
Paulis $\{I,X,Y,Z\}$ per qubit, exploiting degeneracy in a way
that decoders processing $X$ and $Z$ syndromes independently cannot~\cite{LeiferPoulin2008}.} degeneracy advantage. The work in~\cite{RelayBP2025} uses message memory across BP iterations to escape from trapping sets with a focus on the variable node update only. Other approaches include modified BP4 CN update rules~\cite{rigby2019modified}, TG modifications~\cite{ducrest2022stabilizer}, ordered statistics
decoding as post-processing~\cite{hillmann2025localized}, and neural BP~\cite{nachmani2018deep, smiao_qbp_overcomplete2025}.
To date, reduced-complexity decoders for QLDPC codes have been developed primarily in the binary domain or require offline optimization/training or need more complex computations.
We propose the \emph{syndrome adaptive gain Min-Sum} (SAGMS) decoder, which uses the fraction of unsatisfied stabilizers as a low-overhead feedback signal. This quantity directly reflects the distance to convergence and can be computed efficiently using simple bitwise operations\footnote{Early termination checks if any CN is unsatisfied (a single OR/XOR over the syndrome bits). SAGMS additionally computes the popcount $\synratio^{(\iter)}$, implemented as a binary adder tree, which can be pipelined with CN/VN updates at negligible throughput cost.}, with negligible impact on decoding throughput.

\textbf{Contributions:}
We propose SAGMS, a self-adaptive gain MS decoder for the $\GF(4)$
quaternary QLDPC setting, which uses the fraction of unsatisfied stabilizers as a feedback to adapt message gain at MS-level complexity. SAGMS' FER performance is comparable to a SMS decoder in matched channel conditions and exceeds it under channel mismatch and varying CN degrees. 
We prove: (i)~that the scaling factor required for SMS to match BP4 under an uniform message approximation decreases with the CN degree so any fixed scaling factor leads to a growing penalty as the CN degree varies (Proposition~\ref{prop:monotone_alpha}, ahead), and (ii)~that SAGMS steers the effective gain toward its maximum value
near convergence, independently of the CN degree and noise level.

\newlength{\mechwidth}
\newlength{\mechheight}
\setlength{\mechwidth}{0.98\linewidth}   
\setlength{\mechheight}{0.55\linewidth}  
\begin{figure*}[!htbp]
\centering
\definecolor{colBP4}  {RGB}{161, 33, 33}
\definecolor{colMS}   {RGB}{ 80, 80, 80}
\definecolor{colSMS}  {RGB}{ 69, 99,168}
\definecolor{colSAGMS}{RGB}{ 27,158,119}
\def\lw{0.80pt}
\pgfmathsetmacro{\dc}{4}
\pgfmathsetmacro{\alphaeff}{0.65}
\begin{minipage}[t]{0.495\textwidth}
\centering
\hspace{-.4cm}
\begin{tikzpicture}
\begin{axis}[
    width=\mechwidth,
    height=\mechheight,
    xlabel={\scriptsize $d_c$},
    ylabel={\scriptsize $|L_{i\to j}^{(\iter)}|$},
    ylabel near ticks,
    xmin=3, xmax=10,
    ymin=1.3, ymax=3.1,
    xtick={3,4,5,6,7,8,9,10},
    grid=major,
    grid style={dashed,gray!40},
    tick label style={font=\scriptsize},
    legend style={
        font=\scriptsize, draw=none, fill=white,
        fill opacity=0.9, text opacity=1,
        at={(0.70,0.67)}, anchor=north west,
        row sep=-2pt,
    },
    legend cell align=left,
]
\addplot[colMS, mark=square*, mark size=1.8pt, fill=white, line width=\lw,
    opacity=0.85]
    table {figures/tikz/advanced_analysis_example/minsum_output.txt};
\addlegendentry{MS}
\addplot[colSMS,mark=triangle, mark size=1.8pt, fill=white,  line width=\lw,
    opacity=0.85]
    table {figures/tikz/advanced_analysis_example/SMS_output.txt};
\addlegendentry{SMS}
\addplot[colSAGMS, mark=diamond*, mark size=2.5pt,  fill=white, line width=\lw, fill=white]
    table {figures/tikz/advanced_analysis_example/wsynd_output.txt};
\addlegendentry{SAGMS}
\addplot[colBP4, mark=o, mark size=2.0pt, line width=1.0pt]
    table {figures/tikz/advanced_analysis_example/bp4_output.txt};
\addlegendentry{BP4}
\end{axis}
\end{tikzpicture}
{\scriptsize (a)~CN output magnitude $|L_{i\to j}^{(\iter)}|$ vs.\ $d_c$.}
\end{minipage}
\hfill
\begin{minipage}[t]{0.495\textwidth}
\centering
\begin{tikzpicture}
\begin{axis}[
    width=\mechwidth,
    height=\mechheight,
    xlabel={\scriptsize $\tfm$},
    ylabel={\scriptsize $T(\tfm)$},
    ylabel near ticks,
    xmin=0, xmax=3,
    ymin=0, ymax=3.2,
    domain=0.05:3,
    samples=20,
    grid=major,
    grid style={dashed,gray!40},
    tick label style={font=\scriptsize},
    legend style={
        font=\scriptsize, draw=none, fill=white,
        fill opacity=0.9, text opacity=1,
        at={(0.05,0.99)}, anchor=north west,
        row sep=-2pt,
    },
    legend cell align=left,
]
\addplot[colMS, mark=square*, mark size=1.8pt,  fill=white, line width=\lw,
    opacity=0.85]
    {x};
\addlegendentry{MS}
\addplot[colSMS, mark=triangle, mark size=1.8pt,  fill=white, line width=\lw,
    opacity=0.85]
    {0.80*x};
\addlegendentry{SMS}
\addplot[colSAGMS, mark=diamond*, mark size=2.5pt,  fill=white, line width=\lw, fill=white]
    {\alphaeff * x};
\addlegendentry{SAGMS}
\addplot[colBP4, mark=o, mark size=2.0pt,  fill=white, line width=1.0pt]
 {ln((1 + (tanh(x/2))^(\dc-1)) /
        max(1 - (tanh(x/2))^(\dc-1), 1e-10))};
\addlegendentry{BP4}

\end{axis}
\end{tikzpicture}
{\scriptsize
(b)~Transfer function $T(\tfm)$ (see~\eqref{eq:transfer_all})}
\end{minipage}
\caption{\scriptsize
(a)~Magnitude bias of the MS rule relative to BP4,
measured through the CN output magnitude $|L_{i\to j}^{(\iter)}|$
vs.\ CN degree~$d_c$.
(b)~Transfer function (TF) $T(\tfm)$ showing SAGMS provides a
$1^{\text{st}}$-order approximation to the BP4 TF. Illustration (using $d_c{=}4$, $\alpha{=}0.85$, $\alpha_{\mathrm{eff}}{=}0.65$).
}
\label{fig:decoder_mechanism}
\end{figure*}

\section{System Model}
\label{sec:model}
An $[\![n,k]\!]$ QLDPC code has a PCM
$\Hmat\in\GF(4)^{m\times n}$ satisfying
$\Hmat\bm{\Lambda}\Hmat^{\mathsf{T}}=\bm{0}$ over
$\GF(2)$ where $\bm{\Lambda}$ denotes the symplectic form used to define the trace inner product over $\GF(4)$~\cite{Gottesman1997,calderbank1996good}\footnote{\textbf{Notation:}
Scalars, vectors, and matrices are denoted by italic
lowercase ($a$), bold lowercase ($\bm{a}$), and bold
uppercase ($\bm{A}$) letters, respectively.
Sets are written in calligraphic uppercase ($\mathcal{A}$). 
Set of edges $(j,i)$ of the TG are denoted via $\Eset$. $\Mset(i)$/$\Nset(j)$: check-node/variable-node~(VN)
neighbors of CN~$i$ and VN~$j$;
$d_c=|\Mset(i)|$: CN degree; $\Sset$: set of stabilizers checks;
$\iter$: iteration index; $\itermax$: maximum iterations;
$\eps$: true depolarization probability;
$\epszero$: assumed depolarization probability;
$\Lzero$: channel prior log-likelihood ratio~(LLR) defined as $\Lzero \triangleq \ln\!\left(\frac{1-\epszero}{\epszero/3}\right)$;
$\synratio^{(\iter)}$: syndrome ratio;
$\alphaeff^{(\iter)}$: adaptive effective gain;
$\alphastar$: BP4 matching ratio (defined
in~\eqref{eq:alphastar});
$\alpharef$: fixed scaling optimized for a reference CN
degree (used in Proposition~\ref{prop:monotone_alpha});
$\etaunsat>1$: multiplicative boost applied to $\alphaeff$
at unsatisfied CNs (see~\eqref{eq:alphaeff}).}.
Here, $\GF(4)=\{0,1,\bar{\omega},\omega\}$ with elements
identifying Pauli errors $\{I,X,Y,Z\}$~\cite{Gottesman1997,calderbank1996good}.
Each $\GF(4)$ element is mapped via the symplectic isomorphism $\phi_s:\GF(4)\to\GF(2)^2$ after which $\Hmat$ is represented in $\GF(2)^{m\times 2n}$ and $\Hmat\bm{\Lambda}\Hmat^{\mathsf{T}}$ is computed using the binary symplectic inner products between rows. Equivalently this corresponds to evaluating the trace inner product in $\GF(4)$~\cite{Gottesman1997}.
The codes considered in this work are from the generalized bicycle~(GB)
family~\cite{panteleev2022asymptotically,bravyi2024high}.
Under the depolarizing channel, each qubit
$j\in\{1,\ldots,n\}$ suffers an independent Pauli error
$E_j\in\{I,X,Y,Z\}$ with $\Prob\{E_j{=}I\}{=}1\!-\!\eps$ and
$\Prob\{E_j{=}e\}{=}\eps/3$ for $e\in\{X,Y,Z\}$.
The channel is memoryless, so the $n$-qubit error
$\boldsymbol{E}=(E_1,\dots,E_n)$ has joint distribution
$\Prob\{\boldsymbol{E}{=}\evec\}{=}\prod_{j=1}^n\Prob\{E_j{=}e_j\}$.
The syndrome bit at CN $i$ is defined as
$s_i{=}\langle H_i,\evec\rangle_{\mathrm{tr}}\in\{0,1\}$,
where $H_i$ denotes the $i$-th row of $\Hmat$ and
$\langle\cdot,\cdot\rangle_{\mathrm{tr}}$ is the trace inner product
over $\GF(4)$~\cite{Gottesman1997}.
The residual syndrome at iteration $\iter$ is
\begin{equation}\vspace{-2pt}
\sres^{(\iter)} \triangleq
\svec \oplus \Hmat\hat{\evec}^{(\iter)} \;\text{ over }\GF(2),
\label{eq:resid}
\end{equation}\vspace{-2pt}
where $\hat{\evec}^{(\iter)}$ is the tentative error
estimate with each component
$\hat{e}_j^{(\iter)}\in\{0,1,\bar{\omega},\omega\}
\leftrightarrow\{I,X,Z,Y\}$, identical in alphabet to
the entries of $\Hmat$.
The matrix-vector product
$\Hmat\hat{\evec}^{(\iter)}$ uses $\GF(4)$ field
arithmetic (multiplication $\leftrightarrow$ Pauli
composition modulo global phase, and addition
$\leftrightarrow$ bitwise XOR of the symplectic pairs)
and is then projected per-coordinate contributions in inner product to $\GF(2)$ via $\phi_s$~\cite{Gottesman1997,calderbank1998quantum},
yielding a nonzero syndrome bit precisely when a
stabilizer element anticommutes with the corresponding
estimated error. Decoding succeeds when $\sres^{(\iter)}{=}\bm{0}$.
We denote by $\tilde{s}_i^{(\iter)}$ the $i$-th component
of $\sres^{(\iter)}$, with $\tilde{s}_i^{(\iter)}{=}1$ if CN~$i$ is unsatisfied
at iteration~$\iter$ and $\tilde{s}_i^{(\iter)}{=}0$ otherwise.
An overcomplete stabilizer representation~($m{>}n\!-\!k$) is
used, as in~\cite{smiao_qbp_overcomplete2025,
cumitini2024optimalsingleshotdecodingquantum}.
The redundant rows improve FER performance
at the cost of denser short cycles in the TG.
Throughout, we write $[\![n,k]\!]$ to denote a code and state $m$ separately (when required)
to avoid confusion with the minimum distance notation $[\![n,k,d]\!]$.

\textit{Why $\GF(4)$ rather than $\GF(2)$?}
$\GF(2)$-based decoders process $X$ and $Z$ syndromes independently,
losing two key properties: correction decisions are uncoordinated
(unmatched $Y$-syndrome assignments cause the combined correction to
fail~\cite{LeiferPoulin2008}), and degeneracy through $Y$-type
stabilizer products cannot be exploited~\cite{Gottesman1997}.
The $\GF(4)$ formulation handles both natively by propagating joint
beliefs over $\{I,X,Y,Z\}$ per qubit~\cite{LeiferPoulin2008,Mackay2004}.

\textit{BP4 decoder.}
Messages are initialized with $L_{j\to i}^{(1)}{=}\Lzero$
for all edges $(j,i)\in\Eset$, and CN to VN messages at
$\iter{=}1$ are set to zero (no prior CN output available).
The VN update rule at iteration $\iter\geq1$ is
\begin{equation}\vspace{-2pt}
L_{j\to i}^{(\iter)}=\Lzero+
\sum_{k\in\Nset(j)\setminus i}L_{k\to j}^{(\iter-1)}.
\label{eq:bp4_vn}
\end{equation}\vspace{-2pt}
The CN update in the $\phifn$-domain is
\begin{equation}\vspace{-2pt}
L_{i\to j}^{(\iter)} {=}
(-1)^{s_i}
\prod_{k\in\Mset(i)\setminus j}\sgn\!\bigl(L_{k\to i}^{(\iter)}\bigr)\,
\phifn^{-1}\!\Bigl(\sum_{k\in\Mset(i)\setminus j}
\phifn\!\bigl(|L_{k\to i}^{(\iter)}|\bigr)\Bigr)
\label{eq:bp4_cn}
\end{equation}\vspace{-2pt}
where $\phifn(x){=}-\log\tanh(x/2)$ is the Gallager
$\phifn$-function~\cite{Gallager1962}.
The nonlinear $\phifn^{-1}$ provides implicit gain
regulation: since $\phifn$ maps large inputs toward zero
and $\phifn^{-1}{=}\phifn$, the composition
$\phifn^{-1}(\sum_k\phifn(\cdot))$ is always
upper bounded by the minimum incoming magnitude,
preventing runaway amplification along short cycles in the TG.
This is the key property any good MS approximation must preserve.

\textit{MS and SMS decoders.}
The MS CN update replaces $\phifn^{-1}(\cdot)$ with the
minimum incoming magnitude~$\Mij$, i.e.,
\begin{equation}
L_{i\to j}^{(\iter)} =
(-1)^{s_i}
\prod_{k\in\Mset(i)\setminus j}
\sgn\!\bigl(L_{k\to i}^{(\iter)}\bigr)
\Mij.
\label{eq:ms_cn}
\end{equation}
SMS has the same update multiplied by $\scaleNMS\in(0,1]$.
The CN transfer functions (i.e., output magnitude vs.\
minimum input $\tfm=\Mij$) are shown in Fig.\ref{fig:decoder_mechanism}(b) and are given by:
\begin{align}
T_{\mathrm{BP4}}(\tfm) &{=} \phifn^{-1}\!\bigl((d_c-1)\phifn(\tfm)\bigr),\text{   } T_{\mathrm{MS}}(\tfm) {=} \tfm, \notag\\
T_{\mathrm{SMS}}(\tfm) &{=}\alpha\,\tfm, \text{   }
T_{\mathrm{SAGMS}}(\tfm) {=} \alpha_{\mathrm{eff}}^{(\ell)}\,\tfm.
\label{eq:transfer_all}
\end{align}
where all $d_c\!-\!1$ incoming messages are set equal to $\tfm$ (uniform message approximation (UMA), exact at $\iter{=}1$) so each BP4 curve corresponds to a specific $d_c$ value and the single input $\tfm$. At $\iter{>}1$, message distributions widen and the curves shift but the qualitative overestimation of MS relative to BP4 persists.
As $\phifn^{-1}$ is strictly decreasing,
$T_{\mathrm{BP4}}(\tfm)\leq \tfm,\forall \tfm>0$,
confirming that MS systematically overestimates the BP4 output.
\section{SAGMS Decoder}
\label{sec:sagms}
\subsection{SAGMS Parameters Definitions}
\textit{Syndrome ratio.} Let $\Sset$ denote the set of stabilizer checks (CNs)
in the TG with cardinality $|\Sset|$.
At iteration $\iter$, the syndrome ratio is
\begin{equation}\vspace{-2pt}
\synratio^{(\iter)} =
\|\sres^{(\iter)}\|_0 / |\Sset|,
\label{eq:synratio}
\end{equation}\vspace{-2pt}
with $\synratio^{(\iter)}{=}0$ when all stabilizers are
satisfied and $\synratio^{(\iter)}{=}1$ when none are. 
During successful decoding, $\gamma^{(\iter)}$
decreases in general toward zero as the tentative
error estimate $\hat{\evec}^{(\iter)}$ converges toward
a valid stabilizer coset representative.\\
\noindent
\textit{Adaptive effective gain.}
The adaptive effective gain\footnote{Note that $\alphaeff^{(\iter)}$ in~\eqref{eq:alphaeff} depends on $\tilde{s}_i^{(\iter)}$ and is therefore CN-specific. With some abuse of notation, we suppress the index $i$ for notational brevity.} at CN~$i$ and iteration~$\iter$ is
\begin{equation}\vspace{-2pt}
\alphaeff^{(\iter)} =
\bigl[\alphamax - (\alphamax-\alphamin)\synratio^{(\iter)}\bigr]
\cdot
\begin{cases}
\etaunsat, & \tilde{s}_i^{(\iter)}=1, \\
1,         & \tilde{s}_i^{(\iter)}=0,
\end{cases}
\label{eq:alphaeff}
\end{equation}\vspace{-2pt}
where $\alphamin,\alphamax\in(0,1]$ and $\etaunsat>1$.
The stability constraint $\alphamax\etaunsat\leq1$
ensures $\alphaeff^{(\iter)}\leq1$ at all iterations,
so SAGMS never outputs a larger message magnitude than
unscaled MS ($T_{\mathrm{MS}}(\tfm)=\tfm$). Since MS already
overestimates the BP4 output (Section~\ref{sec:model}),
allowing $\alphaeff^{(\iter)}>1$ would cause runaway
amplification on short cycles, which is precisely
the instability that gain scaling is designed to prevent.
As $\synratio^{(\iter)}\to0$ the gain rises toward
$\alphamax$, delivering maximum corrective push near
convergence.
At high $\synratio^{(\iter)}$, it falls to
$\alphamin$, suppressing short cycle oscillation.
The parameter $\etaunsat>1$ is a fixed multiplicative (moderate) boost applied to unsatisfied CNs. It is selected such that $\alphamax\etaunsat\leq1$ ensuring SAGMS never exceeds MS scaling and we interpret it as a local correction beyond the global gain for CNs still far from their individual local optimal points. In practice, $\etaunsat$ is set to the largest value satisfying
$\alphamax\etaunsat\leq1$, directing maximum additional
corrective push to unsatisfied CNs without exceeding pure MS.\\
\noindent
\textit{SAGMS update rule.}
The VN update
\begin{equation}
L_{i\to j}^{(\iter)} =
(-1)^{s_i}
\prod_{k\in\Mset(i)\setminus j}
\sgn\!\bigl(L_{k\to i}^{(\iter)}\bigr)
\cdot\alphaeff^{(\iter)}\,\Mij
\label{eq:sagms_cn}
\end{equation}
is the same as BP4.
Algorithm~\ref{alg:sagms} summarizes the decoder.
\subsection{Benefits of the Syndrome Ratio as Feedback Signal}
\label{sec:gamma_motivation}

The syndrome ratio $\synratio^{(\iter)}$ provides a natural low-complexity signal for online gain adaptation in QLDPC decoding due to the following properties.

\textit{(i) It is the lowest cost convergence-state observable.}
The residual syndrome $\sres^{(\iter)}$ is already computed in each iteration for early termination. While termination only requires detecting whether $\|\sres^{(\iter)}\|_0{=}0$, SAGMS additionally computes the normalized count $\synratio^{(\iter)}=\|\sres^{(\iter)}\|_0/|\Sset|$. This can be implemented via a popcount operation using a binary adder tree of $O(m)$ single bit additions, which can be pipelined and executed in parallel with CN and VN updates, incurring negligible overhead.
No other decoder state signal (e.g., message variance,
LLR magnitude, or extrinsic information
transfer) is available at comparable cost.

\textit{(ii) It tracks the true convergence distance.}
For a converging decoder, $\synratio^{(\iter)}$ decreases in general toward zero as the tentative estimate approaches a valid stabilizer coset representative. It therefore provides a direct and noise-averaged measure of the distance to convergence. In contrast, message magnitudes evolve non-monotonically on loopy graphs due to short cycle correlations and are unreliable as convergence indicators.

\textit{(iii) It provides per-CN local information via $\tilde{s}_i^{(\iter)}$.}
In addition to the global ratio, the individual syndrome bits $\tilde{s}_i^{(\iter)}\!\in\!\{0,1\}$ identify which CNs remain unsatisfied. This enables the per-CN gain differentiation in~\eqref{eq:alphaeff}, where unsatisfied CNs receive additional corrective boost. No extra computation is required since $\tilde{s}_i^{(\iter)}$ is directly available from the syndrome evaluation.
Unlike prior adaptive schemes in the $\GF(2)$ setting~\cite{fan2013anmsa,wu2010adaptive,xu2014variable}, SAGMS requires neither offline calibration nor explicit noise estimation.
LUT-based adaptive methods~\cite{fan2013anmsa,wu2010adaptive,zhao2005adaptive}
use only the global syndrome fraction and do not
differentiate per-CN gain.

\begin{algorithm}[t]
\caption{\small SAGMS Decoder}
\label{alg:sagms}
\small
\begin{algorithmic}[1]
\REQUIRE $\Hmat$, $\svec$, $\Lzero$, $\itermax$
\hfill\COMMENT{$\alphamin,\alphamax,\etaunsat$ from~\eqref{eq:alphaeff}}
\STATE Initialize $L_{j\to i}^{(0)}\!\leftarrow\!\Lzero$
  for all $(j,i)\in\Eset$
\FOR{$\iter=1$ \TO $\itermax$}
\STATE  Compute hard decision $\hat{e}_j \leftarrow
    \arg\min_{e\in\{I,X,Y,Z\}}\bigl[\Lzero +
    \sum_{k\in\Nset(j)}L_{k\to j}^{(\iter-1)}\bigr]$
    for all VN~$j$; then $\sres^{(\iter)}\leftarrow
    \svec\oplus\Hmat\hat{\evec}^{(\iter)}$
  \IF{$\sres^{(\iter)}=\bm{0}$} \RETURN $\hat{\evec}^{(\iter)}$ \hfill(success)
  \ENDIF
  \STATE Compute $\synratio^{(\iter)}\leftarrow
    \|\sres^{(\iter)}\|_0/|\Sset|$
  \STATE For each CN $i$: compute $\alphaeff^{(\iter)}$
    via~\eqref{eq:alphaeff}; update $L_{i\to j}^{(\iter)}$
    via~\eqref{eq:sagms_cn}
  \STATE For each VN $j$: update $L_{j\to i}^{(\iter)}$
    via the BP4 VN rule
\ENDFOR
\RETURN $\hat{\evec}^{(\itermax)}$ \hfill(failure)
\end{algorithmic}
\end{algorithm}
\begin{figure*}[!t]
\centering
\definecolor{colBP4}  {RGB}{161, 33, 33}
\definecolor{colMS}   {RGB}{ 80, 80, 80}
\definecolor{colSMS}  {RGB}{ 69, 99,168}
\definecolor{colSAGMS}{RGB}{ 27,158,119}
\def\lw{0.80pt}
\subfloat[Match: $\eps_0=\eps$\label{fig:FER_match}]{%
\begin{tikzpicture}
\begin{axis}[
    width  = 0.49\textwidth,
    height = 0.30\textwidth,
    xmode=log, ymode=log,
    xmin=1e-2, xmax=1e-1,
    ymin=1e-5,  ymax=1,
    xlabel={\scriptsize $\eps$},
    xlabel style={
        at={(axis description cs:0.5,1.01)},
        anchor=south,
        yshift=-122pt
    },
    ylabel={\scriptsize FER},
    ylabel near ticks,
    grid=major,
    grid style={dashed,gray!40},
    tick label style={font=\scriptsize},
    yticklabel style={/pgf/number format/sci,
                      /pgf/number format/precision=0},
    legend to name=FERlegend,
    legend columns=4,
    legend style={
        font=\scriptsize,
        draw=none,
        fill=none,
        /tikz/every even column/.append style={column sep=0.6em},
        row sep=0pt,
    },
    legend cell align=left,
]
\addplot[color=colBP4, dashed, line width=\lw,
    mark=o, mark options={solid,scale=0.75,fill=white,line width=\lw}]
    file[]{figures/tikz/FER_decoders/GB_n126_k28_m126_BP4_iter4_match.txt};
\addlegendentry{BP4,~$\ell_{\max}\!=\!4$}
\addplot[color=colBP4, solid, line width=\lw,
    mark=o, mark options={solid,scale=0.75,fill=white,line width=\lw}]
    file[]{figures/tikz/FER_decoders/GB_n126_k28_m126_BP4_iter8_match.txt};
\addlegendentry{BP4,~$\ell_{\max}\!=\!8$}
\addplot[color=colMS, dashed, line width=\lw,
    mark=square, mark options={solid,scale=0.75,fill=white,line width=\lw}]
    file[]{figures/tikz/FER_decoders/GB_n126_k28_m126_MS_iter4_match.txt};
\addlegendentry{MS,~$\ell_{\max}\!=\!4$}
\addplot[color=colMS, solid, line width=\lw,
    mark=square, mark options={solid,scale=0.75,fill=white,line width=\lw}]
    file[]{figures/tikz/FER_decoders/GB_n126_k28_m126_MS_iter8_match.txt};
\addlegendentry{MS,~$\ell_{\max}\!=\!8$}
\addplot[color=colSMS, dashed, line width=\lw,
    mark=triangle, mark options={solid,scale=0.85,fill=white,line width=\lw}]
    file[]{figures/tikz/FER_decoders/GB_n126_k28_m126_MS4FixedPv2_iter4_match.txt};
\addlegendentry{SMS,~$\ell_{\max}\!=\!4$}
\addplot[color=colSMS, solid, line width=\lw,
    mark=triangle, mark options={solid,scale=0.85,fill=white,line width=0.8pt}]
    file[]{figures/tikz/FER_decoders/GB_n126_k28_m126_MS4FixedPv2_iter8_match.txt};
\addlegendentry{SMS,~$\ell_{\max}\!=\!8$}
\addplot[color=colSAGMS, dashed, line width=\lw,
    mark=diamond, mark options={solid,scale=0.90,fill=white,line width=0.8pt}]
    file[]{figures/tikz/FER_decoders/GB_n126_k28_m126_SAGMS_iter4_match.txt};
\addlegendentry{SAGMS,~$\ell_{\max}\!=\!4$}
\addplot[color=colSAGMS, solid, line width=\lw,
    mark=diamond, mark options={solid,scale=0.90,fill=white,line width=0.8pt}]
    file[]{figures/tikz/FER_decoders/GB_n126_k28_m126_SAGMS_iter8_match.txt};
\addlegendentry{SAGMS,~$\ell_{\max}\!=\!8$}
\end{axis}
\end{tikzpicture}%
}\hfill
\subfloat[Mismatch: $\eps_0=0.1$\label{fig:FER_mismatch}]{%
\begin{tikzpicture}
\begin{axis}[
    name=mainmismatch,
    width  = 0.49\textwidth,
    height = 0.30\textwidth,
    xmode=log, ymode=log,
    xmin=5e-3, xmax=1e-1,
    ymin=1e-6,  ymax=1,
    xlabel={\scriptsize $\eps$},
    xlabel style={
        at={(axis description cs:0.5,1.01)},
        anchor=south,
        yshift=-122pt,
    },
    ylabel={\scriptsize FER},
    ylabel near ticks,
    grid=major,
    grid style={dashed,gray!40},
    tick label style={font=\scriptsize},
    yticklabel style={/pgf/number format/sci,
                      /pgf/number format/precision=0},
]
\addplot[color=colBP4, dashed, line width=\lw,
    mark=o, mark options={solid,scale=0.75,fill=white,line width=0.8pt}]
    file[]{figures/tikz/FER_decoders/GB_n126_k28_m126_BP4_iter4_mismatch_ep0_0p10.txt};
\addplot[color=colBP4, solid, line width=\lw,
    mark=o, mark options={solid,scale=0.75,fill=white,line width=0.8pt}]
    file[]{figures/tikz/FER_decoders/GB_n126_k28_m126_BP4_iter8_mismatch_ep0_0p10.txt};
\addplot[color=colMS, dashed, line width=\lw,
    mark=square, mark options={solid,scale=0.75,fill=white,line width=0.8pt}]
    file[]{figures/tikz/FER_decoders/GB_n126_k28_m126_MS_iter4_mismatch_ep0_0p10.txt};
\addplot[color=colMS, solid, line width=\lw,
    mark=square, mark options={solid,scale=0.75,fill=white,line width=0.8pt}]
    file[]{figures/tikz/FER_decoders/GB_n126_k28_m126_MS_iter8_mismatch_ep0_0p10.txt};
\addplot[color=colSMS, dashed, line width=\lw,
    mark=triangle, mark options={solid,scale=0.85,fill=white,line width=0.8pt}]
    file[]{figures/tikz/FER_decoders/GB_n126_k28_m126_MS4FixedPv2_iter4_mismatch_ep0_0p10.txt};
\addplot[color=colSMS, solid, line width=\lw,
    mark=triangle, mark options={solid,scale=0.85,fill=white,line width=0.8pt}]
    file[]{figures/tikz/FER_decoders/GB_n126_k28_m126_MS4FixedPv2_iter8_mismatch_ep0_0p10.txt};
\addplot[color=colSAGMS, dashed, line width=\lw,
    mark=diamond, mark options={solid,scale=0.90,fill=white,line width=0.8pt}]
    file[]{figures/tikz/FER_decoders/GB_n126_k28_m126_SAGMS_iter4_mismatch_ep0_0p10.txt};
\addplot[color=colSAGMS, solid, line width=\lw,
    mark=diamond, mark options={solid,scale=0.90,fill=white,line width=0.8pt}]
    file[]{figures/tikz/FER_decoders/GB_n126_k28_m126_SAGMS_iter8_mismatch_ep0_0p10.txt};
\draw[black!60!white, very thin, dashed]
    (axis cs:1e-2, 5e-4) rectangle (axis cs:2e-2, 1.2e-3);
\end{axis}
\begin{axis}[
    name=insetmismatch,
    at={(mainmismatch.south west)},
    anchor=south west,
    xshift=117pt,
    yshift=9.5pt,
    width  = 0.25\textwidth,
    height = 0.17\textwidth,
    xmode=log, ymode=log,
    xmin=7e-3,   xmax=1e-2,
    ymin=5e-6,   ymax=1e-4,
    xtick={0.009, 0.007, 0.01},
    xticklabels={$0.009$, $0.007$, $0.01$},
    ytick={1e-4, 1e-5},
    yticklabels={$10^{-4}$, $1{\cdot}10^{-5}$},
    tick label style={font=\tiny},
    scaled ticks=false,
    tick align=inside,
    grid=major,
    grid style={dashed, gray!60},
    axis background/.style={fill=white},
    axis line style={thin, black},
    major tick length=2pt,
]
\addplot[color=colBP4, solid, line width=0.65pt,
    mark=o, mark options={solid,scale=0.45,fill=colBP4}]
    file[]{figures/tikz/FER_decoders/GB_n126_k28_m126_BP4_iter8_mismatch_ep0_0p10.txt};
\addplot[color=colSMS, solid, line width=0.65pt,
    mark=triangle, mark options={solid,scale=0.50,fill=colSMS}]
    file[]{figures/tikz/FER_decoders/GB_n126_k28_m126_MS4FixedPv2_iter8_mismatch_ep0_0p10.txt};
\addplot[color=colSAGMS, solid, line width=0.65pt,
    mark=diamond, mark options={solid,scale=0.55,fill=colSAGMS}]
    file[]{figures/tikz/FER_decoders/GB_n126_k28_m126_SAGMS_iter8_mismatch_ep0_0p10.txt};
\end{axis}
\end{tikzpicture}%
}%

\vspace{2pt}
\centerline{\pgfplotslegendfromname{FERlegend}}

\caption{\scriptsize
FER vs.\ $\eps$ for BP4, MS, SMS ($\scaleNMS\!=\!0.50$),
and SAGMS ($\alphamin\!=\!0.30$, $\alphamax\!=\!0.50$,
$\etaunsat\!=\!1.10$) on the GB $[\![126,28]\!]$ QLDPC code ($m{=}126$).
Dashed: $\itermax\!=\!4$; solid: $\itermax\!=\!8$.
Inset in~(b) zooming in $\eps\!\in\![7{\cdot}10^{-3},\,1{\times}10^{-2}]$
for $\itermax\!=\!8$, highlighting the SAGMS crossover below BP4.
}
\label{fig:FER_results}
\end{figure*}
\section{Analytical Results}
\label{sec:analysis}
\subsection{BP4 Matching Ratio (under UMA) and Scalability Penalty}
\label{sec:alphastar}
\begin{remark}[\textit{SMS scaling factor from BP4/MS output}]
\label{rem:alphastar}
Under UMA (all $d_c\!-\!1$ magnitudes equal $\Lzero$, exact at
$\iter{=}1$), the MS~CN output equals $\Lzero$ and the BP4~CN
output equals $\phifn^{-1}\!\bigl((d_c\!-\!1)\,\phifn(\Lzero)\bigr)$.
The scaling factor that makes the SMS~CN output equal the BP4~CN
output at $\iter\!=\!1$ is:
\begin{equation}\vspace{-2pt}
\alphastar(\Lzero) {=}
\frac{1}{\Lzero}2\,\atanh\!\bigl(\tanh(\Lzero/2)^{d_c-1}\bigr)
\;\!\approx\!\;
1 - \frac{\ln(d_c-1)}{\Lzero},
\label{eq:alphastar}
\end{equation}\vspace{-2pt}
where the approximation uses $\phifn(x)\approx2e^{-x}$ for $x\gg1$.
Since $\phifn^{-1}$ compresses, $\alphastar<1$: unscaled MS
overestimates the BP4 output by a factor $1/\alphastar$.
\end{remark}
\begin{proposition}[Monotone decrease and scalability penalty]
\label{prop:monotone_alpha}
The BP4-matching ratio $\alphastar(\Lzero,d_c)$
in~\eqref{eq:alphastar} is strictly decreasing in $d_c$:
$\alphastar(\Lzero,d_c\!+\!1)\!<\!\alphastar(\Lzero,d_c)$
for all $d_c\!\geq\!2$.
A fixed $\alpharef$ optimized for a reference CN degree
$d_c^{\mathrm{ref}}$ becomes mismatched when applied to a
different degree $d_c^{\mathrm{new}}$, incurring the
scalability penalty:
\begin{equation}\vspace{-2pt}
\Delta\alpha =
\frac{\ln\bigl[(d_c^{\mathrm{new}}-1)/(d_c^{\mathrm{ref}}-1)\bigr]}
      {\Lzero},
\label{eq:mismatch}
\end{equation}\vspace{-2pt}
which grows unbounded as $d_c^{\mathrm{new}}\to\infty$.
\end{proposition}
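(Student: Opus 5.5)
We prove the two claims separately: monotonicity from the exact closed form in~\eqref{eq:alphastar}, and the penalty from its large-$\Lzero$ approximation.

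\textbf{Strict monotonicity.} Fix $\Lzero>0$ and set $t\triangleq\tanh(\Lzero/2)$. Since $\Lzero$ is finite and positive, $t\in(0,1)$. For $d_c\ge2$ we then have $t^{d_c}=t\cdot t^{d_c-1}<t^{d_c-1}$. The function $\atanh$ is strictly increasing on $(0,1)$, and the prefactor $2/\Lzero$ is a positive constant independent of $d_c$. Hence
\[
\alphastar(\Lzero,d_c+1)=\tfrac{2}{\Lzero}\atanh\bigl(t^{d_c}\bigr)<\tfrac{2}{\Lzero}\atanh\bigl(t^{d_c-1}\bigr)=\alphastar(\Lzero,d_c).
\]
This is the first claim. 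For completeness we also check that the exact expression agrees with the $\phifn$-domain definition in Remark~\ref{rem:alphastar}. Using $\phifn(x)=-\ln\tanh(x/2)$ and the involution $\phifn^{-1}=\phifn$, we get $\phifn^{-1}\bigl((d_c-1)\phifn(\Lzero)\bigr)=2\atanh\bigl(t^{d_c-1}\bigr)$, which is the numerator in~\eqref{eq:alphastar}.

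\textbf{Scalability penalty.} Suppose $\alpharef$ is chosen to equal $\alphastar(\Lzero,d_c^{\mathrm{ref}})$; this is the natural meaning of ``optimized for a reference degree'' under UMA. We define the penalty as $\Delta\alpha\triangleq\alphastar(\Lzero,d_c^{\mathrm{ref}})-\alphastar(\Lzero,d_c^{\mathrm{new}})$. We then insert the approximation $\alphastar\approx1-\ln(d_c-1)/\Lzero$ from Remark~\ref{rem:alphastar}, which comes from $\phifn(x)\approx2e^{-x}$. The constant terms cancel and the difference of logarithms collapses, giving exactly~\eqref{eq:mismatch}.

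To make this step honest rather than purely formal, we quantify the error. With $\phifn(\Lzero)=2e^{-\Lzero}(1+O(e^{-2\Lzero}))$, the argument $(d_c-1)\phifn(\Lzero)$ is small provided $(d_c-1)e^{-\Lzero}\ll1$. In that regime $\phifn^{-1}(y)=\ln(2/y)+O(y^2)$, so $\alphastar=1-\ln(d_c-1)/\Lzero+O\bigl((d_c-1)^2e^{-2\Lzero}/\Lzero\bigr)$, and~\eqref{eq:mismatch} holds up to these exponentially small terms.

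\textbf{Unboundedness, and the main obstacle.} Formula~\eqref{eq:mismatch} is $\ln(d_c^{\mathrm{new}}-1)/\Lzero$ minus a constant, so it diverges as $d_c^{\mathrm{new}}\to\infty$. The delicate point is that the approximation requires $(d_c-1)e^{-\Lzero}\ll1$, which fails for fixed $\Lzero$ once $d_c$ is very large.

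\begin{itemize}
\item In that limit the exact $\alphastar\to0$, because $t^{d_c-1}\to0$. The true mismatch therefore saturates at $\alpharef$ in absolute terms rather than growing without bound.
\item The relative overestimation factor $\alpharef/\alphastar(\Lzero,d_c^{\mathrm{new}})$ does diverge. Using $\atanh(u)\sim u$ as $u\to0$, it grows like $t^{-(d_c-1)}$.
\end{itemize}

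We therefore state the unboundedness for the approximate penalty~\eqref{eq:mismatch}, valid in the regime $d_c\ll e^{\Lzero}$ (equivalently, taking $\Lzero\to\infty$ jointly). We note separately that, in exact terms, the relative penalty grows unboundedly with $d_c^{\mathrm{new}}$.
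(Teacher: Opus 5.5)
Your proof is correct. For the penalty it coincides with the paper's: both subtract the approximation $\alphastar\approx1-\ln(d_c-1)/\Lzero$ evaluated at $d_c^{\mathrm{ref}}$ and $d_c^{\mathrm{new}}$.

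For monotonicity you take a different route. The paper passes to the continuous extension $\tilde{\alpha}(x)=2\atanh(\tanh(\Lzero/2)^{x-1})/\Lzero$ and asserts $\mathrm{d}\tilde{\alpha}/\mathrm{d}x=-1/[(x-1)\Lzero]$. That expression is actually the derivative of the approximation $1-\ln(x-1)/\Lzero$, not of $\tilde{\alpha}$. The exact derivative is $\frac{2\ln t}{\Lzero}\cdot\frac{t^{x-1}}{1-t^{2(x-1)}}$ with $t=\tanh(\Lzero/2)$. It is also negative, so the paper's conclusion survives. Your discrete comparison $t^{d_c}<t^{d_c-1}$, combined with strict monotonicity of $\atanh$, is exact and needs no differentiability, so it avoids this conflation. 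What the paper's derivative buys, in its approximate form, is the $1/(x-1)$ rate, which directly foreshadows the logarithmic penalty.

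Your error term $O\bigl((d_c-1)^2e^{-2\Lzero}/\Lzero\bigr)$ goes beyond the paper, which never states the validity range of~\eqref{eq:mismatch}. So does your observation that the exact absolute mismatch $\alphastar(\Lzero,d_c^{\mathrm{ref}})-\alphastar(\Lzero,d_c^{\mathrm{new}})$ stays below $\alpharef$, with only the relative factor diverging (like $t^{-(d_c^{\mathrm{new}}-1)}$).

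This caveat matters in the paper's own example. For $d_c=16$ and $\Lzero\approx3.30$ one has $(d_c-1)e^{-\Lzero}\approx0.56$. The exact value is $\alphastar\approx0.21$ against the approximate $0.18$. Reading ``grows unbounded'' as a property of formula~\eqref{eq:mismatch} itself, as you do, is the right interpretation.
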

\begin{proof}
The monotone decrease follows from the analytical extension
$\tilde{\alpha}(x){=}2\,\atanh(\tanh(\Lzero/2)^{x-1})/\Lzero$,
whose derivative $\mathrm{d}\tilde{\alpha}/\mathrm{d}x{=}
{-}1/[(x-1)\Lzero]<0$ for all $x>1$. Hence
$\alphastar(\Lzero,d_c\!+\!1)<\alphastar(\Lzero,d_c)$
for all integers $d_c\!\geq\!2$.
The penalty $\Delta\alpha$ follows by subtracting~\eqref{eq:alphastar}
evaluated at $d_c^{\mathrm{ref}}$ and $d_c^{\mathrm{new}}$.
\end{proof}
\begin{remark}[\textit{Consequence for SAGMS}]
SAGMS avoids the growing mismatch of
Proposition~\ref{prop:monotone_alpha} structurally, as its gain
is driven by $\synratio^{(\iter)}$ and requires neither $d_c$
nor $\Lzero$ explicitly.
\end{remark}
\subsection{Gain Ceiling and Self-Calibration}
\label{sec:ceiling}
\begin{corollary}[Gain ceiling and near-convergence self-calibration]
\label{cor:alphamax}
At $\iter{=}1$ all messages equal $\Lzero$, so the UMA is exact
and $\alphamax{=}\alphastar(\Lzero)$ from~\eqref{eq:alphastar}.
At $\iter{>}1$ message magnitudes grow and their distribution
widens, so the true per-message $\alphastar$ is no smaller than
at $\iter{=}1$, thus $\alphamax{=}\alphastar(\Lzero)$ remains a
conservative ceiling throughout decoding.
As $\synratio^{(\iter)}\!\to\!0$,~\eqref{eq:alphaeff} gives
$\alphaeff^{(\iter)}\!\to\!\alphamax\etaunsat$ for unsatisfied
CNs and $\alphaeff^{(\iter)}\!\to\!\alphamax$ for satisfied CNs,
independently of $d_c$ and $\Lzero$. SAGMS self-calibrates
toward the analytically optimal $\alphastar(\Lzero)$.
\end{corollary}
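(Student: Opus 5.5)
The corollary has three parts, and I will prove them in order. The first is the identification of $\alphamax$ at $\iter{=}1$. The second is the conservativeness of that value for $\iter>1$. The third is the limit of \eqref{eq:alphaeff} as $\synratio^{(\iter)}\to0$. Only the second needs real analysis; the others are direct.

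\emph{Step 1.} At $\iter{=}1$, every VN-to-CN message equals $\Lzero$ by the initialization before \eqref{eq:bp4_vn}. Hence every CN sees $d_c-1$ equal magnitudes, and the UMA holds exactly. Remark~\ref{rem:alphastar} then gives the BP4-matching scaling as $\alphastar(\Lzero)$. Choosing $\alphamax=\alphastar(\Lzero)$ makes the gain ceiling (the $\synratio=0$, satisfied-CN value of \eqref{eq:alphaeff}) reproduce the BP4 CN output exactly at the first iteration. This step is a matter of setting the parameter and needs no argument beyond Remark~\ref{rem:alphastar}.

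\emph{Step 2 (main obstacle).} I must show that for $\iter>1$ the ratio $T_{\mathrm{BP4}}/T_{\mathrm{MS}}$ seen by an actual CN is at least $\alphastar(\Lzero)$, where $T_{\mathrm{BP4}}$ is evaluated on the actual incoming magnitudes and $T_{\mathrm{MS}}=\Mij$ is their minimum.

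\emph{Uniform part.} Define $g(x)=\phifn^{-1}((d_c-1)\phifn(x))/x$. The closed form in \eqref{eq:alphastar} gives $g(x)=2\atanh(\tanh(x/2)^{d_c-1})/x$, which behaves like $1-\ln(d_c-1)/x$ for large $x$. I will show $g$ is nondecreasing in $x$, either by differentiating the closed form or by using the asymptotic $\phifn(x)\approx2e^{-x}$ as the paper does. This handles the uniform case: if the magnitudes grow above $\Lzero$, the matching ratio grows.

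\emph{Non-uniform part.} Fix the minimum $\tfm=\Mij$. Replacing the other $d_c-2$ inputs by larger values decreases their $\phifn$ terms, because $\phifn$ is decreasing. The argument of $\phifn^{-1}$ therefore decreases, so the BP4 output increases while the MS output stays $\tfm$. The non-uniform ratio is thus bounded below by the uniform ratio $g(\tfm)$.

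\emph{Obstacle.} Combining the two parts requires $\tfm\ge\Lzero$. This is not guaranteed for every edge, since CN messages can have signs that reduce VN sums below $\Lzero$. I will state this as the modelling hypothesis implicit in the corollary, namely that message magnitudes grow under successful decoding. Under that hypothesis the inequality $\alphastar_{\text{true}}\ge\alphastar(\Lzero)=\alphamax$ follows, which is exactly the claimed conservativeness.

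\emph{Step 3.} For any CN, \eqref{eq:alphaeff} is affine in $\synratio^{(\iter)}$ with a multiplier depending only on $\tilde s_i^{(\iter)}$. Letting $\synratio^{(\iter)}\to0$ gives $\alphamax\etaunsat$ for unsatisfied CNs and $\alphamax$ for satisfied ones. None of $\alphamin$, $\alphamax$, $\etaunsat$, or $\synratio^{(\iter)}$ depends on $d_c$ or $\Lzero$ through the update rule, so the limit is independent of both. By Step 1, the satisfied-CN limit equals $\alphastar(\Lzero)$, which gives the self-calibration statement. The stability constraint $\alphamax\etaunsat\le1$ keeps the unsatisfied-CN limit within MS scaling.
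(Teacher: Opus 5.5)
Your proposal is correct and follows the same reasoning the paper compresses into the corollary text itself: exact UMA at $\iter=1$, growth and widening of magnitudes for $\iter>1$, and reading off the $\synratio^{(\iter)}\to0$ limit of \eqref{eq:alphaeff}; your Step~2 makes the paper's ``grow and widen'' assertion precise by splitting it into monotonicity of $\phifn^{-1}((d_c-1)\phifn(x))/x$ plus a fixed-minimum comparison, and by stating explicitly the hypothesis $\Mij\ge\Lzero$ that the paper leaves implicit. For the monotonicity, use the exact route rather than the asymptotic $1-\ln(d_c-1)/x$, which only shows approximate monotonicity: with $t=\tanh(x/2)$ and $n=d_c-1$, $h(x)=2\atanh(t^n)$ has $h'(x)=n/\sum_{k=0}^{n-1}t^{2k-n+1}$, which increases with $x$ because each pair $t^{a}+t^{-a}$ decreases on $(0,1)$, so $h$ is convex with $h(0)=0$ and therefore $h(x)/x$ is nondecreasing.
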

\subsection{Gain Law Design: Linear Approximation}
\label{sec:gainlaw}
The specific linear dependence of $\alphaeff^{(\iter)}$ on
$\synratio^{(\iter)}$ in~\eqref{eq:alphaeff} is not arbitrary.
We show next that it arises as the first-order Taylor approximation
of the optimal per-iteration gain around the convergence point
$\mu^\star$, with $\synratio^{(\iter)}$ used as a proxy for
the normalized convergence deficit
$(\mu^\star - \mu^{(\iter)})/\mu^\star$.

\textit{Fixed-point condition.}
Let $\mu^{(\iter)}$ denote the mean extrinsic message magnitude
at iteration $\iter$. Under the density evolution~(DE) framework
for regular LDPC codes~\cite{RichardsonUrbanke2008}, the
update $\mu^{(\iter+1)}{=}f(\mu^{(\iter)})$ converges to a
fixed point $\mu^\star$ satisfying $f(\mu^\star){=}\mu^\star$.
For the SMS decoder with gain $\scaleNMS$, the CN update
contributes a magnitude bias of
\begin{equation}\vspace{-2pt}
\mu_{\mathrm{MS}}^{(\iter+1)}
= \scaleNMS \cdot g\!\bigl(\mu^{(\iter)}, d_c\bigr),
\label{eq:de_ms}
\end{equation}\vspace{-2pt}
where $g(\mu, d_c) = \mathbb{E}[\min_{k=1}^{d_c-1}|L_k|]$
is the expected minimum of $d_c{-}1$ i.i.d.\ messages with
mean $\mu$.
We define the \emph{optimal gain at mean magnitude $\mu$} as
\begin{equation}\vspace{-2pt}
\alphaopt(\mu) :=
\frac{T_{\mathrm{BP4}}(\mu)}{g(\mu,d_c)},
\label{eq:alphaopt}
\end{equation}\vspace{-2pt}
i.e., the value of $\scaleNMS$ making the SMS~CN output equal
the BP4~CN output when all messages have mean~$\mu$.
Note that $\alphastar(\Lzero)$ in~\eqref{eq:alphastar} is the
special case $\alphaopt(\Lzero)$ at $\iter{=}1$ under UMA,

\textit{Adaptive gain as trajectory correction.}
Away from $\mu^\star$, $\synratio^{(\iter)}$ measures the
fractional distance to convergence, where
$\synratio^{(\iter)}\!\approx\!0$ means near the fixed point
and $\synratio^{(\iter)}\!\approx\!1$ means far from it.
Expanding $\alphaopt(\mu^{(\iter)})$ to first order around
$\mu^\star$:
\begin{equation}\vspace{-2pt}
\alphaopt\!\bigl(\mu^{(\iter)}\bigr) \approx
\alphaopt(\mu^\star)
+ \frac{\partial\alphaopt}{\partial\mu}\biggr|_{\mu^\star}
\!\!\!\cdot\bigl(\mu^{(\iter)} - \mu^\star\bigr).
\label{eq:taylor}
\end{equation}\vspace{-2pt}
Since $\mu^{(\iter)}<\mu^\star$ when decoding has not converged
and $\synratio^{(\iter)}$ increases as the difference
$(\mu^\star-\mu^{(\iter)})$ grows, substituting
$\synratio^{(\iter)}$ as a proxy for
$(\mu^\star-\mu^{(\iter)})/\mu^\star$ and setting
$\alphamax{=}\alphaopt(\mu^\star){=}\alphastar(\Lzero)$
(Corollary~\ref{cor:alphamax}) and $\alphamin$ as the boundary
value at $\synratio{=}1$ yields
\begin{equation}\vspace{-2pt}
\alphaopt\!\bigl(\mu^{(\iter)}\bigr)
\approx \alphamax {-} (\alphamax {-} \alphamin)\,\synratio^{(\iter)},
\label{eq:linear_approx}
\end{equation}\vspace{-2pt}
which is exactly the global ramp in~\eqref{eq:alphaeff}.
The parameters $\alphamin$ and $\alphamax$ are the boundary
values at maximum violation ($\synratio\!\to\!1$) and at
convergence ($\synratio\!\to\!0$).

\textit{Local CN differentiation.}
The per-CN boost $\etaunsat$ in~\eqref{eq:alphaeff} extends
this first-order approximation by directing additional corrective
push to unsatisfied CNs ($\tilde{s}_i^{(\iter)}{=}1$), which
have not reached their local fixed point, at no additional
computational cost. The stability constraint
$\alphamax\etaunsat\!\leq\!1$ ensures $\alphaeff^{(\iter)}\!\leq\!1$
at all iterations. Together,~\eqref{eq:alphaeff} implements a
two-level linear approximation to the ideal per-CN, per-iteration
gain $\alphaopt(\mu_i^{(\iter)})$ using only
$\tilde{s}_i^{(\iter)}$ and $\synratio^{(\iter)}$, both
available at negligible cost.
\section{Complexity and Simulations}
\label{sec:sims}
\subsection{Complexity}
Using the weighted operation count of~\cite{chen2002near,RyanLin2009}
with $(a,b,c,d)=(1,1,1,10)$ to denote the costs of additions, multiplications,
comparisons, and transcendental evaluations, respectively, 
the \emph{per CN update complexity} (producing one output message per edge) is given by:
\begin{equation}
C_{\mathrm{BP4}} = 22d_c-13,\quad C_{\mathrm{SMS}} = 2d_c-2,\quad C_{\mathrm{SAGMS}} = 2d_c+3.
\label{eq:complexity}
\end{equation}
These values match the entries in Table~\ref{tab:decoder_models}. SAGMS adds three additions and one precomputed
multiplication over SMS, as can be verified in Table~\ref{tab:decoder_models}.
At $d_c=10$, $C_{\mathrm{BP4}}=207$,
$C_{\mathrm{SMS}}=18$, and $C_{\mathrm{SAGMS}}=23$.
As observed in~\eqref{eq:complexity}, the gap between BP4 and SAGMS grows linearly with $d_c$, while the overhead of SAGMS over SMS remains constant. 
\begin{table}[t]
\centering
\caption{Decoder Models and Operation Counts per CN Update}
\label{tab:decoder_models}
\renewcommand{\arraystretch}{0.90}
\setlength{\tabcolsep}{4pt}
\resizebox{\columnwidth}{!}{
\begin{tabular}{@{}lllcccc@{}}
\toprule
\textbf{Decoder}
  & \textbf{$T(u)$}
  & \textbf{Add} & \textbf{Mul}
  & \textbf{Cmp} & \textbf{Trans.} & \textbf{Adapt?} \\
\midrule
BP4   
  & $\phi^{-1}\!\bigl((d_c{-}1)\phi(u)\bigr)$
  & $d_c\!-\!2$ & $d_c\!-\!1$& $0$ & 2$d_c\!-\!1$ & no \\
MS
  & $u$
  & $0$ & $d_c\!-\!1$ & $d_c\!-\!2$ & $0$ & no \\
\textbf{SAGMS}
  & $\alphaeff^{(\ell)}\,u$
  & $3$ & $d_c\!+\!1$& $d_c\!-\!1$ & $0$ & yes \\
\bottomrule
\end{tabular}
}
\end{table}
\subsection{Setup}
Simulations use the $[\![126,28]\!]$ GB code\footnote{To validate Proposition~\ref{prop:monotone_alpha}, we simulate another GB code $[\![126,20]\!]$ with OS ($m{=}126$, $d_c{=}16$) only under the mismatch scenario ($\epszero{=}0.10$).} with $m{=}126$,
$d_c{=}d_v{=}10$~\cite{panteleev2021asymptotically,
smiao_qbp_overcomplete2025}.
Monte Carlo trials run until $e_t{=}500$ frame failures are
recorded per noise level, over at most $2{\times}10^7$ frames,
giving a $95\%$ Wilson CI~\cite{wilson1927probable} of
${\leq}{\pm}9\%$ relative half-width per point,
independently of the FER level.
SMS uses $\scaleNMS{=}0.50$, optimized offline for this code and noise range.
SAGMS uses $\alphamin{=}0.30$, $\alphamax{=}0.50$, and
$\etaunsat{=}1.10$, satisfying $\alphamax\etaunsat{=}0.55\leq1$,
The online gain control via $\synratio^{(\iter)}$ reduces
$\alphaeff^{(\iter)}$ toward $\alphamin$ when the decoder
is far from convergence and raises it toward $\alphamax$
as $\synratio^{(\iter)}\!\to\!0$.
The ratio $\alphamin/\alphamax{=}0.60$ sets the adaptation
range, and $\etaunsat{=}1.10$ provides a moderate per-CN
corrective boost at unsatisfied CNs.

\begin{figure}[t]
\centering
\definecolor{colBP4}  {RGB}{161, 33, 33}
\definecolor{colSMS}  {RGB}{ 69, 99,168}
\definecolor{colSAGMS}{RGB}{ 27,158,119}
\definecolor{colOpt}  {RGB}{204,102,  0}   
\def\lw{0.80pt}
\begin{tikzpicture}
\begin{axis}[
    width  = 0.99\columnwidth,
    height = 0.55\columnwidth,
    xmode  = log, ymode = log,
    xmin   = 3e-3, xmax = 1e-1,
    ymin   = 1e-5, ymax = 1,
    xlabel = {\scriptsize $\eps$},
    ylabel = {\scriptsize $\FER$},
    ylabel near ticks,
    xlabel style={
    at={(axis description cs:0.5,1.01)},
    anchor=south,
    yshift=-105pt
    },
    grid   = major,
    grid style={dashed,gray!40},
    tick label style={font=\scriptsize},
    yticklabel style={/pgf/number format/sci,
                      /pgf/number format/precision=0},
    legend style={
        font=\scriptsize, draw, fill=white,
        fill opacity=0.9, text opacity=1,
        at={(1,0.15)}, anchor=south east,
        row sep=-2pt,
    },
    legend cell align=left,
]
\addplot[color=colBP4, solid, line width=\lw,
    mark=o,
    mark options={solid,scale=0.75,fill=white,line width=0.8pt}]
    file[]{figures/tikz/FER_newcode/GB_n126_k20_m126_BP4_iter8_mismatch_ep0_0p10.txt};
\addlegendentry{BP4}
\addplot[color=colSMS, solid, line width=\lw,
    mark=triangle,
    mark options={solid,scale=0.85,fill=white,line width=0.8pt}]
    file[]{figures/tikz/FER_newcode/GB_n126_k20_m126_SMS_a0p50_iter8_mismatch_ep0_0p10.txt};
\addlegendentry{SMS}
\addplot[color=colSAGMS, solid, line width=\lw,
    mark=diamond,
    mark options={solid,scale=0.90,fill=colSAGMS,line width=0.8pt}]
    file[]{figures/tikz/FER_newcode/GB_n126_k20_m126_SAGMS_iter8_mismatch_ep0_0p10.txt};
\addlegendentry{SAGMS}
\addplot[color=colOpt, only marks, line width=\lw,
    mark=*,
    mark options={solid,scale=0.90,fill=white}]
    file[]{figures/tikz/FER_newcode/GB_n126_k20_m126_SMS_iter8_mismatch_alpha0.3_ep0_0p10.txt};
\addlegendentry{SMS at $\alpha^{*}\!=\!0.3$}

\end{axis}
\end{tikzpicture}
\caption{\scriptsize
$\FER$ vs.\ $\eps$ for BP4, SMS ($\alpha\!=\!0.50$), and SAGMS
on the GB $[\![126,20]\!]$ code ($m\!=\!126$, $d_c\!=\!16$),
$\itermax\!=\!8$, mismatch ($\epszero\!=\!0.1$).
Match case omitted: both SMS and SAGMS outperform BP4, see Fig.~\ref{fig:FER_results}.
Orange points: SMS with offline-optimised $\alpha^\star\!=\!0.30$
(3 points, not connected).
SAGMS uses same parameters as for $d_c\!=\!10$ and achieves similar performance than SMS (offline optimized for this code, in orange).
}
\label{fig:FER_dc16}
\end{figure}

\subsection{Results}
Fig.~\ref{fig:FER_results} shows $\FER$ versus $\eps$ for
BP4, MS, SMS, and SAGMS.
MS performs poorly across the entire range due to systematic
magnitude overestimation, with $\FER{>}12\%$ at $\eps{=}0.01$,
consistent with the $1/\alphastar{\approx}2.2{\times}$ CN
output overestimation at $d_c{=}10$ predicted
by~\eqref{eq:alphastar}.

For $\epszero=\eps$ (Match case), at $\itermax=8$, BP4 achieves the
lowest $\FER$ for $\eps\geq0.04$.
Below $\eps{\approx}0.03$, SAGMS produces a steeper waterfall
slope\footnote{The regime where FER drops steeply with
decreasing noise level.} and crosses below both BP4 and SMS;
SMS crosses below BP4 between $\eps{=}0.03$ and $\eps{=}0.02$.
At $\eps{=}0.01$, SAGMS reaches $\FER{=}3.9{\times}10^{-5}$
and SMS reaches $4.6{\times}10^{-5}$, both approximately
$10{\times}$ lower than BP4 ($4.0{\times}10^{-4}$).
The two-sample log-ratio CI on the SAGMS/SMS ratio is
$[0.74,\,0.95]$~\cite{wilson1927probable}, confirming a
statistically significant advantage for SAGMS at this point.
At $\eps{=}0.01$ and $\itermax{=}4$, SAGMS achieves
$\FER{=}3.4{\times}10^{-4}$, comparable to BP4 at
$\itermax{=}8$ ($4.0{\times}10^{-4}$), reaching equivalent
performance at half the iteration budget.
The mechanism underlying the BP4 crossing is as follows. At low noise, $\Lzero$ is large and
$\alphastar(\Lzero)\to1$ from~\eqref{eq:alphastar}, so the SMS output already closely approximates BP4 and needs only minimal scaling.
The $\phifn^{-1}$ compression of BP4, calibrated to attenuate
messages by $\alphastar{<}1$, becomes increasingly excessive
relative to messages already well-calibrated by the large
channel prior. SMS and SAGMS, whose gain approaches the BP4-matching value in
this regime, pass larger magnitudes per CN update and converge
faster.
As $\eps$ increases, $\Lzero$ decreases, $\alphastar(\Lzero)$
decreases, and BP4 compression becomes necessary to counteract
the intrinsic overestimation of MS messages, restoring BP4's
advantage for $\eps\!\gtrsim\!0.04$.
Under fixed $\epszero{=}0.10$ (mismatch), all decoders benefit from the
regularization effect identified in prior work~\cite{smiao_qbp_overcomplete2025},
with the matched LLR now above the optimal initialization for low noise.
BP4 retains implicit gain regulation via $\phifn^{-1}$
and outperforms both MS variants for $\eps\!\gtrsim\!0.03$.
With $\itermax=8$, SAGMS crosses below BP4 at
$\eps\approx0.009$--$0.010$ and achieves lower $\FER$ below this
point; at $\eps{=}0.01$, both reach $\FER\approx4.3{\times}10^{-5}$.
This crossover does not occur for SMS, which remains above BP4
throughout the mismatch regime, confirming that the gain is specific
to the online adaptation of SAGMS.
SAGMS also outperforms SMS for $\eps\lesssim0.03$.

SMS and SAGMS achieve nearly identical FER in the matched case and both outperform BP4 in this scenario. However, under channel mismatch, SAGMS outperforms BP4 in low noise regime
($\eps\approx0.009$--$0.010$) with $\itermax=8$. 
For the $[\![126,20]\!]$ GB
code ($m{=}126$, $d_c{=}16$), under the mismatch scenario, $\alphastar\approx1-\ln(15)/3.30\approx0.18$,
substantially below $0.33$ (the value for $d_c{=}10$) at UMA.
A fixed $\scaleNMS{=}0.50$ therefore exceeds $\alphastar(\Lzero,16)$
by $0.32$, and the degree mismatch quantified
by~\eqref{eq:mismatch} gives $\Delta\alpha\approx0.15$, both
predicting SMS degradation.
The match case is omitted: both SMS and SAGMS outperform BP4 under matched conditions on this code.
SAGMS uses the same values as for $d_c{=}10$.
Fig.~\ref{fig:FER_dc16} shows that SAGMS
outperforms SMS at all simulated $\eps$.
The orange markers show the FER achieved by SMS
with the offline-optimised scaling $\alpha^\star{=}0.30$ for this code under mismatch.
The FER curves in Fig.~\ref{fig:FER_dc16} show that the gap
between SAGMS and BP4 narrows consistently as $\eps$ decreases.
Extending this trend to a few points in the lower noise regime, we observe a crossover $\eps\approx0.003$--$0.004$, analogous to the~$\eps\!\approx\!0.009$--$0.010$ crossing for $d_c{=}10$.
\section{Conclusions}
\label{sec:conclusion}
We proposed SAGMS, a low-complexity adaptive MS decoder for
QLDPC codes with online gain driven by the syndrome ratio, which requires no offline parameter optimization while being independent of $d_c$ and the noise level.
We showed that the
BP4-matching ratio $\alphastar$ decreases
monotonically with $d_c$ (for $d_c\!\ge\!2)$, so any static~$\scaleNMS$ incurs a growing mismatch as codes
scale. SAGMS avoids this penalty by design.
Simulations on the $[\![126,28]\!]$ GB code with $\itermax=8$ 
confirm that SAGMS outperforms both offline optimized SMS and BP4 under channel match conditions from $\eps\le0.03$ onward.
Under channel mismatch, SAGMS outperforms SMS for $\eps\lesssim0.03$ and outperforms BP4 for $\eps\lesssim0.010$, with the crossover observed between $\eps=0.010$ and $\eps=0.009$.
Validation on a second GB code $[\![126,20]\!]$,
under channel mismatch, confirms that SAGMS outperforms SMS
($\scaleNMS{=}0.50$) across the full noise range and achieves similar
performace than the offline-optimised SMS ($\alpha^\star{=}0.30$) without
any per-code per noise level optimization, confirming the advantage of adaptive scaling as $d_c$ varies.
\balance
\bibliographystyle{IEEEtran}
\bibliography{references_paper}
\end{document}